\newtheorem{prop}{Proposition}
\newtheorem{theorem}{Theorem}
\title{\large Differential Chaos Shift Keying-based Wireless Power Transfer}
\name{Priyadarshi Mukherjee, Constantinos Psomas, and~Ioannis Krikidis
\thanks{This work has received funding from the European Research Council (ERC) under the European Union's Horizon 2020 research and innovation programme (Grant agreement No. 819819). This work was also co-funded by the European Regional Development Fund and the Republic of Cyprus through the Research and Innovation Foundation, under the projects INFRASTRUCTURES/1216/0017 (IRIDA) and EXCELLENCE/0918/0377 (PRIME).}
}
\address{Department of Electrical and Computer Engineering, University of Cyprus\\
Email: \{mukherjee.priyadarshi, psomas, krikidis\}@ucy.ac.cy}
\begin{document}
%
\maketitle
\begin{abstract}
In this work, we investigate differential chaos shift keying (DCSK), a communication-based waveform, in the context of wireless power transfer (WPT). Particularly, we present a DCSK-based WPT architecture, that employs an analog correlator at the receiver in order to boost the energy harvesting (EH) performance. By taking into account the nonlinearities of the EH process, we derive closed-form analytical expressions for the peak-to-average-power-ratio of the received signal as well as the harvested power. Nontrivial design insights are provided, where it is shown how the parameters of the transmitted waveform affects the EH performance. Furthermore, it is demonstrated that the employment of a correlator at the receiver achieves significant EH gains in DCSK-based WPT systems.
\end{abstract}
\begin{keywords}
Differential chaos shift keying, wireless power transfer, nonlinear energy harvesting.
\end{keywords}
\section{Introduction}

According to Ericsson, the wireless traffic is expected to increase more than five times between $2019$ and $2025$ \cite{ericsson}. For applications such as massive machine-type communications, where a large number of devices are deployed, powering or charging becomes critical as well as costly. Hence, low-powered and self-sustainable next generation wireless communication networks is an important and relevant topic of research. In this context, based on the advances made in recent years, wireless power transfer (WPT) can be considered as a suitable candidate, where the devices are wirelessly powered by harvesting energy from ambient/dedicated radio-frequency (RF) signals \cite{harv}. This is achieved by employing a rectifying antenna (rectenna) at the receiver that converts the received RF signals to direct current (DC).

The design of efficient WPT architectures fundamentally relies on accurate mathematical models for the energy harvesting (EH) circuit. The work in \cite{harv} proposes a realistic nonlinear model of the EH circuit, which depends on the circuit characteristics and also enables the design of excitation waveforms that maximize the WPT efficiency. This model has triggered interests in the area of wireless power waveform design, with an objective of maximizing the RF-to-DC conversion efficiency. The authors in \cite{papr} show that the nonlinearity of the rectification process at the EH circuit causes certain waveforms, with high peak-to-average-power-ratio (PAPR) to provide higher output DC power, compared to conventional constant-envelop sinusoidal signals. Based on this observation, there are some works, which investigate the effect of the transmitted symbols and modulation techniques on WPT. By considering the nonlinear EH model proposed in \cite{harv}, the authors in \cite{wdesg} investigate the use of multisine waveforms for WPT due to their high PAPR. The work in \cite{hparam} proposes a simultaneous wireless information and power transfer (SWIPT) architecture based on the superposition of multi-carrier unmodulated and modulated waveforms at the transmitter. Apart from the multisine waveforms, experimental studies demonstrate that due to their high PAPR, chaotic waveforms outperform conventional single-tone signals in terms of WPT efficiency \cite{chaosexp2}.

Due to its properties such as sensitivity to initial data and aperiodicity, chaotic waveforms have been extensively used in the past to improve the performance of wireless communication systems. In this context, the non-coherent modulation technique of differential chaos shift keying (DCSK) is one of the most widely studied chaotic signal-based communication system \cite{ch2}. The majority of the related works focus on the error performance of such systems for various scenarios. To exploit the benefits of both DCSK and WPT, there are few works in the literature that investigate SWIPT in a chaotic framework, e.g. \cite{chaoswipt1,chaoswipt3,chaoswipt4}. In \cite{chaoswipt1}, a short-reference DCSK-based SWIPT architecture is proposed to achieve higher data rate than the conventional system. A chaotic multi-carrier system is investigated in a SWIPT framework via the sub-carrier index to reduce the energy consumption \cite{chaoswipt3}. In \cite{chaoswipt4}, adaptive link selection for buffer-aided relaying is investigated in a DCSK-SWIPT architecture, where two link-selection schemes based on harvested energy are proposed.

However, the above studies consider a simplified linear model for the EH, and as a result, they are independent of the circuit characteristics as well as the design of excitation waveforms  \cite{harv}. Motivated by this, in this paper, we present a DCSK-based WPT architecture by taking into account the nonlinearities of the EH process. Specifically, the contribution of this work is three fold. Firstly, we propose a novel WPT architecture, where an analog correlator is employed at the receiver in order to boost the EH performance. Secondly, the flexibility of the proposed architecture is demonstrated, as the correlator can control the PAPR of the received signal at the harvester. Finally, analytical expressions of PAPR and the harvested DC are derived for both cases, i.e. with and without the correlator. The analytical framework provides a convenient methodology for obtaining nontrivial insights into how key parameters affect the performance.

\section{A Chaotic Signal-based WPT System Architecture} \label{SM}

\subsection{System model}
\begin{figure}[!t]
\centering\includegraphics[width=\linewidth]{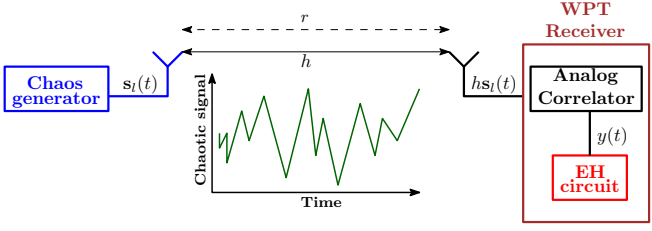}
\caption{Proposed architecture for DCSK-based WPT.}
\label{fig:model}
\vspace{-4mm}
\end{figure}

We consider a point-to-point WPT set-up, where the transmitter employs a DCSK generator and the receiver consists of an analog correlator, followed by an EH circuit, as shown in Fig. \ref{fig:model}. Note that, DCSK signals, until now, have been mainly considered for information transfer. However, here we focus on the WPT aspect and ignore the information side of the signal. Furthermore, the decoding of these signals can be found in \cite{ch2,chaoswipt1,chaoswipt4}. We assume that the received power is proportional to $r^{-\alpha}$, where $r$ is the transmitter-receiver (Tx-Rx) distance and $\alpha$ denotes the path-loss exponent. Moreover, a Rayleigh distributed flat fading channel $h$ is considered with unit mean power, i.e. $f_{|h|}(\gamma)=2\gamma e^{-\gamma^2},$ $\forall$ $\gamma \geq 0.$

By considering a circuit-based nonlinear model of the harvester circuit, the output DC current is approximated in terms of the input signal $y(t)$ as \cite{harv}
\vspace{-2mm}
\begin{equation} \label{brunoeh}
z_{\text{DC}}=k_2R_{ant}\mathbb{E} \{ |y(t)|^2 \}+k_4R_{ant}^2\mathbb{E} \{ |y(t)|^4 \},
\end{equation}
which is a monotonically increasing function of the DC component of the current at the rectifier output and the parameters $k_2,k_4,$ and $R_{ant}$ are constants determined by the characteristics of the circuit.

\subsection{Chaotic signals}

Each DCSK symbol is represented by two consecutive chaotic signal components; the first one serves as the reference, while the second carries the data $\pm1$. The $l$-th transmitted DCSK symbol is characterized by a sequence of $2\beta$, $\beta>0$, samples of the chaotic basis signal, in which the $k$-th bit is given by \cite{ch2}
\vspace{-2mm}
\begin{align}  \label{sym}
s_{l,k}=\begin{cases} 
x_k, & k=2(l-1)\beta+1,\dots,(2l-1)\beta,\\
d_lx_{k-\beta}, & k=(2l-1)\beta+1,\dots,2l\beta,
\end{cases}&
\end{align}
where $d_l=\pm1$ is the $l$-th information bit, and $x_k$ is the chaotic basis signal. As $2\beta$ chaotic samples are being used to spread each information bit, $\beta$ is defined as the \textit{spreading factor}. Thus, the overall $l$-th transmitted symbol at time $t$ is ${\bf{s}}_l(t)=[s_{l,1}(t),s_{l,2}(t),\cdots,s_{l,2\beta}(t)]$. Due to its good correlation properties, we consider the Chebyshev chaotic map of degree $\xi$, where the invariant probability density function (PDF) of $x_k$, namely $f_X(x)$ is \cite{ch2}
\vspace{-2mm}
\begin{align}  \label{spdf}
f_X(x)=\begin{cases} 
\frac{1}{\pi\sqrt{1-x^2}}, & |x|< 1,\\
0, & \text{otherwise}.
\end{cases}&
\end{align}

\subsection{Analog correlator}   \label{AC}

The proposed WPT architecture employs an analog correlator, followed by an EH rectifier circuit. An analog correlator essentially consists of a series of $(\psi-1)$ delay blocks, where $\psi$ is a positive integer; the rationale behind this application is that, the signal can be effectively integrated over a certain time interval \cite{anaco2}. Thus, an ideal $\psi$-bit analog correlator provides an output signal $y(t)=h\int_{\nu=-\psi T}^0 {\bf{s}}_l(t-\nu)d\nu$
where $h$ is the channel coefficient, $T$ is the bit period, and ${\bf{s}}_l(t)$ is the chaotic input signal \cite{anaco2}. In what follows, for the sake of simplicity, we will consider $\psi$ equal to the transmitted DCSK symbol length, i.e. $\psi=2\beta$. Hence, the correlator output $y_l(t)$ for the $l$-th transmitted symbol ${\bf{s}}_l(t)$ is
\vspace{-2mm}
\begin{equation}    \label{corr}
y_l(t)=\sqrt{P_t}h\sum\limits_{k=1}^{2\beta} s_{l,k}(t),
\end{equation}
where $P_t$ is the transmission power. Note that $\psi=1$ corresponds to the conventional case without a correlator. We state the following proposition, that refers to the effect of the analog correlator on the signal's PAPR.

\begin{prop}    \label{prp1}
The signal PAPR at the harvester input is
\end{prop}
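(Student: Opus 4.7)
The plan is to compute the PAPR directly from its definition $\mathrm{PAPR}=\max_t|y(t)|^2/\mathbb{E}\{|y(t)|^2\}$ in the two regimes of interest: $\psi=2\beta$ (correlator present), using (4), and $\psi=1$ (no correlator, i.e.\ a single chip goes into the harvester). In each case both the peak and the expectation must be evaluated using the DCSK symbol structure in (2) and the Chebyshev invariant density in (3).

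First I would simplify the sum in (4). Partitioning the $2\beta$ chips into the reference half and the data-bearing half via (2), the correlator output collapses to $y_l=\sqrt{P_t}\,h\,(1+d_l)\sum_{k=1}^{\beta}x_k$, which equals $2\sqrt{P_t}\,h\sum_{k=1}^{\beta}x_k$ when $d_l=+1$ and vanishes when $d_l=-1$. The peak of $|y_l|^2$ is then attained when $d_l=+1$ and all reference chips saturate the support $|x|<1$ at a common sign, yielding $\max|y_l|^2=4\beta^2 P_t|h|^2$. For the denominator I would appeal to three statistical facts about the Chebyshev basis: $\mathbb{E}[x_k]=0$ by symmetry of (3), $\mathbb{E}[x_k^2]=\int_{-1}^{1}x^2 f_X(x)\,dx=1/2$ by a direct integration, and the orthogonality $\mathbb{E}[x_k x_j]=0$ for $k\neq j$, which is precisely the ``good correlation property'' cited after (3). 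Combined with the independence of $d_l$ from the chips and $\mathbb{E}[(1+d_l)^2]=2$, this gives $\mathbb{E}[|y_l|^2]=P_t|h|^2\beta$, so that the ratio for the correlator case is $4\beta$. The baseline case $\psi=1$ specialises (4) to $y(t)=\sqrt{P_t}\,h\,x_k$, for which the PAPR is simply $1/\mathbb{E}[x_k^2]=2$, independent of $\beta$.

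The principal obstacle is the orthogonality of the chaotic samples: without $\mathbb{E}[x_kx_j]=0$ the cross-terms in $\mathbb{E}\{(\sum_k x_k)^2\}$ would contribute an $O(\beta^2)$ piece and destroy the clean linear-in-$\beta$ scaling of the PAPR, so the argument relies on the Chebyshev choice and not merely on (3). A subsidiary subtlety is that $P_t$ and $|h|^2$ appear in both numerator and denominator and therefore cancel, making the PAPR a pure function of $\beta$; to justify this one should keep the peak and the expectation under the same conditioning (either both conditional on $h$, or both marginalised over it) before forming the ratio.
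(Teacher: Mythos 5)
Your proposal is correct and follows essentially the same route as the paper's Appendix A: take the peak by saturating every chip at $|x_k|=1$ (giving $4|h|^2\beta^2$ with the correlator and $2|h|^2\beta$ without), and compute the average power from $\mathbb{E}[x_k^2]=1/2$ together with the cross-term orthogonality $\mathbb{E}[x_ix_j]=0$ for $i\neq j$, keeping numerator and denominator conditioned on the same channel instance so $P_t$ and $|h|^2$ cancel. Your explicit factorization $y_l=\sqrt{P_t}\,h\,(1+d_l)\sum_{k=1}^{\beta}x_k$ is the same device the paper itself uses in Appendix B, so there is no substantive difference in approach.
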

\vspace{-8mm}
\begin{align}  \label{prop1}
\mathrm{PAPR}=\begin{cases} 
2, & \text{without correlator} \:\: (\psi=1),\\
4\beta, & \text{with correlator} \:\: (\psi=2\beta).
\end{cases}&
\end{align}
\begin{proof}
See Appendix A.
\end{proof}

\noindent From Proposition \ref{prp1}, we observe that the correlator can control the value of PAPR through the design parameter $\psi$. Also, since high PAPR signals are desirable for WPT \cite{papr}, the correlator can significantly enhance the EH performance of the DCSK signals.

\section{Chaotic Signal-based \\ Wireless Power Transfer} \label{harvsec}

In this section, we investigate the effect of the DCSK waveform on the WPT performance. Specifically, we evaluate $z_{\text{DC}}$ for both cases with/without the correlator. For the sake of simplicity, we will use $\rho_1 = k_2R_{ant}P_t$ and $\rho_2=k_4R_{ant}^2P_t^2$. Hence, from \eqref{brunoeh} and (\ref{corr}), the harvested DC with a correlator is
\vspace{-2mm}
\begin{equation}
z_{\text{C}}=\!r^{-\alpha}\rho_1\mathbb{E}\left\lbrace\! \left(  |h|\sum\limits_{k=1}^{2\beta}s_k \right)^{\!2} \right\rbrace + r^{-2\alpha}\rho_2\mathbb{E}\left\lbrace\! \left(  |h|\sum\limits_{k=1}^{2\beta}s_k \right)^{\!4} \right\rbrace, \quad\label{zdef1}
\end{equation}
\vspace{-2mm}
and for the case without a correlator is
\vspace{-1.2mm}
\begin{equation}
z_{\text{NC}}=\!r^{-\alpha}\rho_1\mathbb{E}\left\lbrace |h|^2\sum\limits_{k=1}^{2\beta}s_k^2 \right\rbrace  + r^{-2\alpha}\rho_2\mathbb{E}\left\lbrace |h|^4\sum\limits_{k=1}^{2\beta}s_k^4 \right\rbrace, \label{zdef2}
\end{equation}
where the expectation is taken over $h$ and $s_k$.

We now state the following two theorems, which provide closed-form expressions for $z_{\text{C}}$ and $z_{\text{NC}}$.

\begin{theorem} \label{nakath1}
When a correlator is employed, the harvested DC is
\vspace{-2mm}
\begin{align}  \label{theocnaka}
z_{\text{C}}=\begin{cases} 
r^{-\alpha}\rho_1+6r^{-2\alpha}\rho_2, & \beta=1,\\
r^{-\alpha}\rho_1\beta+12r^{-2\alpha}\rho_2\beta^2, & \beta>1.
\end{cases}&
\end{align}
\end{theorem}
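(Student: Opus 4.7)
The plan is to evaluate the two terms in \eqref{zdef1} separately, using the mutual independence of the fading coefficient $h$, the data bit $d_l$, and the Chebyshev samples $\{x_k\}$. Direct integration against $f_{|h|}$ gives $\mathbb{E}\{|h|^2\}=1$ and $\mathbb{E}\{|h|^4\}=2$, while a standard trigonometric substitution in \eqref{spdf} gives $\mathbb{E}\{x_k^2\}=1/2$ and $\mathbb{E}\{x_k^4\}=3/8$.

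The key structural step is to collapse the inner sum by exploiting \eqref{sym}: within a single DCSK symbol the second block of $\beta$ samples is $d_l$ times the first, so $\sum_{k=1}^{2\beta} s_k = (1+d_l)\sum_{k=1}^{\beta} x_k$. Since $d_l=\pm 1$ equiprobably, $(1+d_l)\in\{0,2\}$, hence $\mathbb{E}\{(1+d_l)^2\}=2$ and $\mathbb{E}\{(1+d_l)^4\}=8$. Writing $S=\sum_{k=1}^{\beta} x_k$, the factorization reduces the problem to $z_{\text{C}}=2\,r^{-\alpha}\rho_1\,\mathbb{E}\{S^2\}+16\,r^{-2\alpha}\rho_2\,\mathbb{E}\{S^4\}$.

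The second moment is immediate: the Chebyshev iterates are zero-mean and uncorrelated, so $\mathbb{E}\{S^2\}=\beta/2$, which already produces the linear term $r^{-\alpha}\rho_1\beta$ for both branches of the theorem. For $\beta=1$ the fourth moment is also immediate, $\mathbb{E}\{S^4\}=\mathbb{E}\{x_1^4\}=3/8$, which contributes $6r^{-2\alpha}\rho_2$ and establishes the first branch.

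The main obstacle is $\mathbb{E}\{S^4\}$ for $\beta>1$: a naive independence-based expansion returns $3\beta(2\beta-1)/8$, which does not reproduce the claimed coefficient $12\beta^2$. To recover the stated expression I would invoke a central-limit-type Gaussian approximation, justified because the Chebyshev map of degree $\xi\ge 2$ is strongly mixing on its invariant measure; then $S/\sqrt{\beta}$ is asymptotically Gaussian and the Isserlis/Wick identity gives $\mathbb{E}\{S^4\}\approx 3(\mathbb{E}\{S^2\})^2=3\beta^2/4$. Substituting yields the nonlinear term $12r^{-2\alpha}\rho_2\beta^2$ and accounts for the piecewise split at $\beta=1$ in the theorem statement.
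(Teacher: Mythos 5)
Your proof is correct and reaches both branches of \eqref{theocnaka} by the same two key ideas as the paper: collapsing the DCSK symbol to $(1+d_l)\sum_{k=1}^{\beta}x_k$, treating $\beta=1$ exactly, and invoking a Gaussian approximation for $\beta>1$. Where you differ is in execution. The paper derives the full distribution of $S=|h|(1+d)\sum_k x_k$: it conditions on $h$, writes the CDF, differentiates, unconditions over the Rayleigh density to get $f_S(s)$ (e.g., $\frac{1}{4\sqrt{\beta}}e^{-|s|/\sqrt{\beta}}$ for $\beta>1$), then transforms to $Z=S^2$ and $P=S^4$ and integrates for the means. You instead factor the expectations directly over the independent factors, using $\mathbb{E}\{|h|^2\}=1$, $\mathbb{E}\{|h|^4\}=2$, $\mathbb{E}\{(1+d)^2\}=2$, $\mathbb{E}\{(1+d)^4\}=8$, and the arcsine moments $1/2$ and $3/8$, closing the $\beta>1$ case with Isserlis on the CLT limit of $V=\sum_k x_k$. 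Your route is shorter and makes transparent exactly which moments drive each coefficient (it also sidesteps the paper's intermediate PDF manipulations, where one of the normalizations is easy to get wrong); the paper's route yields the full densities of $S^2$ and $S^4$, which is more information than the theorem needs but is reusable elsewhere. Your explicit observation that the exact pairwise-moment expansion gives $\mathbb{E}\{S^4\}=3\beta(2\beta-1)/8$ rather than the Gaussian $3\beta^2/4$ is a nice touch the paper omits: it quantifies the $O(\beta)$ discrepancy the paper only describes as an approximation error below $5\%$. The one caveat, common to both arguments, is that convergence in distribution does not by itself justify replacing $\mathbb{E}\{S^4\}$ by the fourth moment of the Gaussian limit; neither you nor the paper closes that gap rigorously, so it is not a defect relative to the paper's own standard.
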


\begin{proof}
See Appendix B.
\end{proof}

\noindent Next, we consider the case of a WPT receiver without the analog correlator, which is given by the following theorem.
\begin{theorem} \label{nakath2}
When a correlator is not employed, the harvested DC is given by
\vspace{-2mm}
\begin{equation}  \label{theonaka}
z_{\text{NC}}=r^{-\alpha}\rho_1\beta+\frac{3}{2}r^{-2\alpha}\rho_2\beta.
\end{equation}
\end{theorem}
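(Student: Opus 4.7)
The plan is to exploit the independence between the channel gain $h$ and the chaotic samples $s_k$, so that each expectation in \eqref{zdef2} factorizes into a Rayleigh moment times a DCSK-symbol moment; each of the latter then reduces to a sum of per-sample Chebyshev moments.

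First I would observe that, by the construction in \eqref{sym}, for $k=1,\dots,\beta$ we have $s_{l,k}=x_k$, while for $k=\beta+1,\dots,2\beta$ we have $s_{l,k}=d_l x_{k-\beta}$ with $d_l=\pm 1$. Because $d_l^2=1$ and $d_l^4=1$, every $s_{l,k}^2$ equals $x_j^2$ and every $s_{l,k}^4$ equals $x_j^4$ for some index $j$ drawn from the common Chebyshev distribution in \eqref{spdf}. Linearity of expectation then gives
\begin{equation*}
\mathbb{E}\!\left\{\sum_{k=1}^{2\beta}s_k^2\right\}=2\beta\,\mathbb{E}\{X^2\},\qquad \mathbb{E}\!\left\{\sum_{k=1}^{2\beta}s_k^4\right\}=2\beta\,\mathbb{E}\{X^4\},
\end{equation*}
where $X$ has the density $f_X$ in \eqref{spdf}.

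Next I would evaluate the two Chebyshev moments. Using the substitution $x=\sin\theta$ in $\mathbb{E}\{X^{2n}\}=\frac{1}{\pi}\int_{-1}^{1}\frac{x^{2n}}{\sqrt{1-x^2}}\,dx$, the integrals reduce to $\frac{1}{\pi}\int_{-\pi/2}^{\pi/2}\sin^{2n}\theta\,d\theta$, which are standard and yield $\mathbb{E}\{X^2\}=\tfrac{1}{2}$ and $\mathbb{E}\{X^4\}=\tfrac{3}{8}$. In parallel, since $|h|$ is Rayleigh with unit mean power, $|h|^2$ is exponential with unit mean, so $\mathbb{E}\{|h|^2\}=1$ and $\mathbb{E}\{|h|^4\}=2$.

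Finally, invoking independence of $h$ and the chaotic sequence, each expectation in \eqref{zdef2} splits: the first term contributes $r^{-\alpha}\rho_1\cdot 1\cdot 2\beta\cdot\tfrac{1}{2}=r^{-\alpha}\rho_1\beta$, and the second contributes $r^{-2\alpha}\rho_2\cdot 2\cdot 2\beta\cdot\tfrac{3}{8}=\tfrac{3}{2}r^{-2\alpha}\rho_2\beta$, giving \eqref{theonaka}. There is no real obstacle here: unlike the correlator case in Theorem~\ref{nakath1}, the second and fourth powers commute with the sum without producing cross terms, so no care is needed with dependencies between distinct $s_k$'s (in particular no Chebyshev autocorrelation identities are required). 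The only minor subtlety is checking that the sign randomization $d_l$ is absorbed by the even powers, which was done above.
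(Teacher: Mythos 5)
Your proposal is correct, and it reaches \eqref{theonaka} by a genuinely different and more elementary route than the paper. The paper's Appendix C splits into the cases $\beta=1$ and $\beta>1$: for $\beta=1$ it derives the full PDFs of $\Delta=|h|^2\sum_k x_k^2$ and $\Theta=|h|^4\sum_k x_k^4$ through successive transformations of random variables and unconditioning over the Rayleigh channel, and then reads off the means; for $\beta>1$ it appeals to a CLT-based argument analogous to Appendix B, with the details omitted. You instead bypass all distributional machinery: you use $d_l^2=d_l^4=1$ to reduce each summand to a marginal Chebyshev moment, apply linearity of expectation (which, as you correctly note, makes the dependence between distinct chaotic samples irrelevant since no cross terms appear), and factor out the Rayleigh moments $\mathbb{E}\{|h|^2\}=1$, $\mathbb{E}\{|h|^4\}=2$ by independence of $h$ from the chaotic sequence. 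All your intermediate values check out ($\mathbb{E}\{X^2\}=\tfrac12$, $\mathbb{E}\{X^4\}=\tfrac38$), and the two contributions combine to exactly \eqref{theonaka}. What your approach buys is that the result is shown to be \emph{exact for every} $\beta$ with no case split and no CLT approximation, which is a strictly stronger conclusion than the paper's presentation suggests for $\beta>1$; what the paper's longer route buys is the full PDFs of the harvested quantities, which carry more information than the first moments alone, though none of that is needed to establish the theorem as stated.
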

\begin{proof}
See Appendix C.
\end{proof}

From Theorems \ref{nakath1} and \ref{nakath2}, we observe that $z_{\text{C}}$ and $z_{\text{NC}}$ is a quadratic and linear function of $\beta$, respectively. Therefore, for fixed $r$, $\rho_1$ and $\rho_2$, we have $z_{\text{C}}>z_{\text{NC}}$, $\forall$ $\beta$. This clearly demonstrates the impact of the correlator on the WPT performance. Also, note that if a linear EH model is used, i.e. only the second order term of \eqref{brunoeh} is considered, both are equivalent.

Now, with a slight abuse of notation, let $r_{\rm C}$ and $r_{\rm NC}$ be the Tx-Rx distances for $z_{\rm C}$ and $z_{\rm NC}$, respectively. We can show that $z_{\rm C} \geq z_{\rm NC}$, even when $r_{\rm C} > r_{\rm NC}$. To achieve this, $\beta$ needs to satisfy 
\vspace{-2mm}
\begin{equation} \label{blimit}
\beta > \frac{\rho_1 \left(r_{\text{NC}}^{-\alpha}-r_{\text{C}}^{-\alpha}  \right) +1.5\rho_2 r_{\text{NC}}^{-2\alpha} }{12 \rho_2 r_{\text{C}}^{-2\alpha} },
\end{equation}
\vspace{-0.8mm}
which can be easily derived from \eqref{theocnaka} and \eqref{theonaka}. 

Next, we compare the proposed architecture with the existing multisine waveform based EH framework \cite{wdesg}. We know from \cite{wdesg}, that the achieved EH performance when an $N$-tone multisine waveform is used, results in a $z_{\text{DC}}$ whose linear term is independent of $N$ and the nonlinear term is linearly dependent on $N$. Furthermore, for a given bandwidth, the number of tones in a multisine waveform cannot be decided arbitrarily. A large $N$ implies very small inter-tone spacing, which results in low output DC voltage at the harvester. On the other hand, a too small $N$ suggests infinitely large inter-tone spacing, i.e. most of the signal gets filtered out by the low pass filter at the  harvester. Hence, multisine waveforms can enhance the WPT performance only if an optimum inter-tone spacing is selected, and experimentally it is observed that we have $N<10$ \cite{bweff}. In the case of DCSK-based waveforms, as observed from Theorem $1$, the linear and nonlinear terms of $z_{\text{C}}$ are proportional to $\beta$ and $\beta^2$, respectively; also, we usually have $\beta \gg 1$ \cite{chaoswipt1}. Hence, the WPT performance of the proposed DCSK-based WPT architecture is significantly greater than the multisine waveform-based EH.

\section{Numerical Results}

We consider a transmission power $P_t=30$ dBm and path-loss exponent $\alpha=4$. The parameters considered for the WPT model are $k_2=0.0034,k_4=0.3829,$ and $R_{ant}=50$ $\Omega$ \cite{wdesg}. Recall that, $\psi=2\beta$ and $\psi=1$ correspond to a WPT receiver with and without the analog correlator, respectively.

Fig. \ref{fig:comp} demonstrates the performance of the proposed DCSK-based WPT architecture in terms of harvested DC, where we observe a significant improvement. Although the analytical expressions for $z_{\text{DC}}$, when $\beta>1$, are central limit theorem (CLT)-based, we observe that the theoretical results match very closely with the simulation results. The significant gain in WPT performance with the correlator is related to the high PAPR, which is a function of the spreading factor $\beta$, as stated in Proposition \ref{prp1}. Moreover, we note that, with/without the correlator at the receiver, the harvested DC with Tx-Rx distance $30$ m is less compared to the harvested DC with Tx-Rx distance $20$ m; this is intuitive due to the path-loss factor. Finally, we also observe that $z_{\text{C}}$ with Tx-Rx distance of $30$ m outperforms $z_{\text{NC}}$ with a smaller Tx-Rx distance of $20$ m when we have $\beta>52$. This observation matches the lower bound proposed in \eqref{blimit}, which results in $\beta>52$ for the considered set of system parameters.

\begin{figure}[!t]
\centering\includegraphics[width=0.88\linewidth]{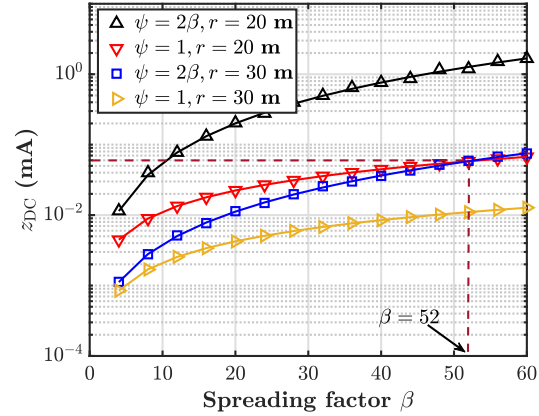}
\vspace{-4mm}
\caption{\small Effect of spreading factor on $z_{\text{DC}}$; lines correspond to analysis and markers correspond to simulation results.}
\label{fig:comp}
\vspace{-2mm}
\end{figure}

\section{Conclusion}

In this work, we investigate DCSK-based WPT, by taking into account the nonlinearities of the EH process. Specifically, we use a communication-based waveform for the purpose of WPT and also propose a new analog correlator-aided WPT receiver. By considering a Rayleigh fading scenario, we derive analytical expressions of the harvested DC for both cases, i.e. with and without the correlator. Our results show the significant gains that can be achieved by our proposed approach. A promising extension of this work is to investigate this design framework in the context of SWIPT.

\section{Appendix A: Proof of Proposition 1}

Without a correlator at the receiver, i.e. for $\psi=1$, the PAPR corresponding to $l$-th transmitted symbol is
\vspace{-2mm}
\begin{equation}
\mathrm{PAPR}=\frac{\max\limits_{l} \left\lbrace  \sum\limits_{k=2(l-1)\beta+1}^{2l\beta}|h_l|^2s_{l,k}^2 \right\rbrace }{\mathbb{E}\left\lbrace \sum\limits_{k=2(l-1)\beta+1}^{2l\beta}|h_l|^2s_{l,k}^2 \right\rbrace }.
\end{equation}
By considering a channel instance $h_l$ and the PDF of $x_k$ from \eqref{spdf}, 
we obtain $\max\limits_{l} \left\lbrace  \sum\limits_{k=2(l-1)\beta+1}^{2l\beta}|h_l|^2s_{l,k}^2 \right\rbrace=2|h_l|^2\beta$ and
\vspace{-2mm}
\begin{equation} \label{deno}
\mathbb{E}\left\lbrace \sum\limits_{k=2(l-1)\beta+1}^{2l\beta}|h_l|^2s_{l,k}^2 \right\rbrace=|h_l|^2\!\!\!\!\sum\limits_{k=2(l-1)\beta+1}^{2l\beta} \mathbb{E}\{s_{l,k}^2\} =|h_l|^2\beta.
\end{equation}
Hence, for $\psi=1$, we have $\rm PAPR=\frac{2|h_l|^2\beta}{|h_l|^2\beta}=2$. On the other hand, for $\psi=2\beta$, i.e. with a correlator, we have
\vspace{-2mm}
\begin{equation}
\mathrm{PAPR}=\frac{\max\limits_{l}\left\lbrace \left( \sum\limits_{k=2(l-1)\beta+1}^{2l\beta}|h_l|s_{l,k} \right)^2 \right\rbrace}{\mathbb{E} \left\lbrace \left( \sum\limits_{k=2(l-1)\beta+1}^{2l\beta}|h_l|s_{l,k} \right)^2 \right\rbrace },
\end{equation}
where from (\ref{spdf}), we have $\max\limits_{l}\left\lbrace \sum\limits_{k=2(l-1)\beta+1}^{2l\beta}|h_l|s_{l,k} \right\rbrace^2=4|h_l|^2\beta^2$. Note that for chaotic sequences generated by the Chebyshev map, we have $\mathbb{E}[s_{l,i}s_{l,j}]=0$ for $i \neq j$ \cite[Eq.~55]{nrml}. As such, $\mathbb{E} \left\lbrace \left( \sum\limits_{k=2(l-1)\beta+1}^{2l\beta}|h_l|s_{l,k} \right)^2 \right\rbrace=|h_l|^2\beta,$
which follows from (\ref{deno}). Hence, we have $\mathrm{PAPR}=\frac{4|h_l|^2\beta^2}{|h_l|^2\beta}=4\beta$.

\section{Appendix B: Proof of Theorem 1}

The quantity $|h|\sum\limits_{k=1}^{2\beta}s_k$ in \eqref{zdef1} can be alternatively written as
\vspace{-2mm}
\begin{equation} \label{nchrv}
s=|h|\sum\limits_{k=1}^{2\beta}s_k=|h|\sum\limits_{k=1}^{2\beta}x_k=|h|(1+d)\sum\limits_{k=1}^{\beta}x_k.
\end{equation}
\vspace{-0.6mm}
By assuming equally likely transmission of $d=\pm 1$ and if we consider $a=1+d$, it results in the PDF $f_A(a)=\frac{1}{2},$ $a\in \{0,2\}$. Hence, $s$ can be written as a product of random variables, i.e. $S=|h|AV$, where we have $V=\sum\limits_{k=1}^{\beta}X_k$, i.e. the PDF $f_V(v)$ is the $(\beta-1)$-fold convolution of $f_X(x)$ \cite{papoulis}. As a closed-form expression cannot be obtained for $f_V(v)$ with any arbitrary $\beta$, we obtain an analytical expression of $f_S(s)$ for $\beta=1$ and a well-approximated CLT-based expression of $f_S(s)$ for $\beta > 1$, respectively.

\subsection{Case $\beta=1$}

For $\beta=1,$ we have $s=|h|(1+d)x_1$. Therefore the cumulative distribution function of $s$ conditioned on $h$ is
\vspace{-2mm}
\begin{align}
F_{S|H}(s|h)&=\sum\limits_{i=0,2}\text{Pr}(AX<s|A=i,h)\text{Pr}(A=i) \nonumber \\
&=\frac{1}{2}\left[ \frac{1}{\pi}\sin^{-1}\left( \frac{s}{2|h|} \right) +\frac{1}{2}+ \mathbf{1}_{[0,\infty)}\left( \frac{s}{|h|}\right) \right].
\end{align}
As a result, the pdf of $s$ conditioned on $h$ is obtained as $f_S(s|h)=\frac{\partial F_{S|H}(s|h)}{\partial s}=\frac{1}{2\pi\sqrt{4|h|^2-s^2}}$. Accordingly we obtain $f_S(s)$ by unconditioning on $h$, i.e. $f_S(s)=\int\limits_{\frac{s}{2}}^{\infty} \frac{\alpha e^{-\alpha^2}}{\pi\sqrt{4\alpha^2-s^2}} d\alpha =\frac{e^{-\frac{s^2}{4}}}{4\sqrt{\pi}}.$ Based on \eqref{zdef1}, we need the PDFs corresponding to $S^2=Z$ and $S^4=P$. The CDF of $Z$ is obtained as $F_Z(z)=\mathbb{P}(-\sqrt{z} \leq S \leq \sqrt{z})=F_S(\sqrt{z})-F_S(-\sqrt{z})$, which results in the PDF $f_Z(z)=\frac{\partial F_Z(z)}{\partial z}=\frac{1}{4\sqrt{\pi z}}e^{-\frac{z}{4}}$. As $Z$ is non-negative in nature, the CDF of $P$ is  $F_P(p)=\mathbb{P}(Z^2 \leq p)=\mathbb{P}(Z \leq \sqrt{p})=F_Z(\sqrt{p})$, i.e.
\vspace{-1.6mm}
\begin{equation}    \label{ppdfapp2}
f_P(p)=\frac{\partial F_P(p)}{\partial p}=\frac{1}{2\sqrt{p}} f_Z(\sqrt{p})=\frac{1}{8\sqrt{\pi}p^{3/4}}e^{-\frac{\sqrt{p}}{4}}.
\end{equation}
\vspace{-1.6mm}
Hence,  we obtain $z_{\text{C}}$ as
\begin{equation}
z_{\text{C}}=r^{-\alpha}\rho_1\mathbb{E}[Z] + r^{-2\alpha}\rho_2\mathbb{E}[P] =r^{-\alpha}\rho_1+6r^{-2\alpha}\rho_2.
\end{equation}

\subsection{Case $\beta >1$}

We first perform a CLT-based characterization of $\left\lbrace |h|\sum\limits_{k=1}^{2\beta}x_k \right\rbrace^2$ and $\left\lbrace |h|\sum\limits_{k=1}^{2\beta}x_k \right\rbrace^4,$ followed by deriving an approximate closed-form expression of $z_{\text{C}}$. We know that $s=|h|(1+d)\sum\limits_{k=1}^{\beta}x_k,$ where for a sufficiently large $\beta$, CLT states that $v=\sum\limits_{k=1}^{\beta}x_k$ will follow a Gaussian distribution with mean $\mu=\beta\mathbb{E}[X]$ and variance $\sigma^2=\beta \text{Var}[X]$, where $\mathbb{E}[X]=0$ and $\text{Var}[X]=\frac{1}{2}.$ Hence, we have $V \sim \mathcal{N}\left( 0,\frac{\beta}{2} \right)$ and by following a similar methodology as in the $\beta=1$ case, we derive the CDF of $S$
\vspace{-2mm}
\begin{equation} \label{bgf1}
F_{S|H}(s|h)=\frac{1}{2}\left[ \frac{1}{2}\left(1+\text{erf}\left[\frac{s}{2|h|\sqrt{\beta}} \right]  \right)+ \mathbf{1}_{[0,\infty)}\left( \frac{s}{|h|}\right)  \right].
\end{equation}
\vspace{-2mm}
By differentiating (\ref{bgf1}), we obtain $f_{S|H}(s|h)=\frac{1}{4|h|\sqrt{\pi\beta}}e^{-\frac{s^2}{4|h|^2\beta}},$ which by unconditioning on $h$ yields $f_S(s)=\frac{1}{4\sqrt{\beta}}e^{-\sqrt{\frac{s^2}{\beta}}}$. Now we obtain the pdf of $s^2(=z)$ and $s^4(=p)$ as $f_Z(z)=\frac{1}{4\sqrt{\beta z}}e^{-\sqrt{\frac{z}{\beta}}}$ and $f_P(p)=\frac{1}{8\sqrt{\beta}p^{\frac{3}{4}}}e^{-\sqrt{\frac{\sqrt{p}}{\beta}}}$ respectively. Hence
\vspace{-2mm}
\begin{equation}
z_{\text{C}}=r^{-\alpha}\rho_1\mathbb{E}[Z] + r^{-2\alpha}\rho_2\mathbb{E}[P] =r^{-\alpha}\rho_1\beta+12r^{-2\alpha}\rho_2\beta^2.
\end{equation}
\vspace{-0.8mm}
It is interesting to observe that although we have obtained a CLT-based analytical expression, the approximation error is always less than $5\%$ irrespective of the value of $\beta.$

\section{Appendix C: Proof of Theorem 2}

The proof follows steps similar with the analysis in Theorem \ref{nakath1}, i.e. separately considering the cases of $\beta=1$ and $\beta>1$.

\subsection{Case $\beta=1$}

Let $Y=X^2$, where $X$ follows the PDF as stated in (\ref{spdf}). Hence, we obtain $F_Y(y)=\mathbb{P}(X^2 \leq y)=\mathbb{P}(-\sqrt{y} \leq X \leq \sqrt{y})= F_X(\sqrt{y})-F_X(-\sqrt{y})$ and so the PDF is given by
\vspace{-2mm}
\begin{equation} \label{theo1n}
f_Y(y)=\frac{1}{2\sqrt{y}}\left[f_X(\sqrt{y})+f_X(-\sqrt{y}) \right]=\frac{1}{\pi\sqrt{y(1-y)}}.
\end{equation}
Thus, the PDF of $\Delta=\sum\limits_{k=1}^{2\beta}x_k^2$ corresponding to $\beta=1$ is equal to $f_{\Delta}(\delta)=\frac{1}{2}f_Y\left(\frac{\delta}{2} \right)=\frac{1}{\pi \sqrt{\delta(2-\delta)}}$. We obtain the PDF of $|h|^2\Delta$ as $f_{\Delta|H}(\delta|h)=\frac{1}{|h|^2}f_{\Delta}\left(\frac{\delta}{|h|^2} \right) =\frac{1}{\pi \sqrt{\delta(2|h|^2-\delta)}}$, which results in $f_{\Delta}(\delta)=\int\limits_{0}^{\infty}\frac{e^{-\alpha}}{\pi \sqrt{\delta(2\alpha-\delta)}}d\alpha=\frac{1}{\sqrt{2\pi \delta}}e^{-\frac{\delta}{2}}.$

Next, we require the PDF of $|h|^4\Theta$, where $\Theta=\sum\limits_{k=1}^{2\beta}x_k^4$ and from (\ref{theo1n}), the PDF of $Y=X^2$ is $f_Y(y)=\frac{1}{\pi\sqrt{y(1-y)}}$. Hence, we obtain the CDF of $G=Y^2$ as $F_G(g)=\mathbb{P}(Y^2 \leq g)=\mathbb{P}(0\leq Y \leq \sqrt{g})=F_Y(\sqrt{g})$, i.e. the PDF $f_G(g)$ is
\vspace{-2mm}
\begin{equation}  \label{dpdfn}
f_G(g)=\frac{\partial F_G(g)}{\partial g}=\frac{1}{2\sqrt{g}}f_Y(\sqrt{g})=\frac{1}{2\pi\sqrt{g^{3/2}(1-\sqrt{g})}}.
\end{equation}
Therefore, the PDF of $\Theta$ is obtained by standard transformation of random variables as $f_{\Theta}(\theta)=\frac{1}{2^{5/4}\pi\sqrt{1-\sqrt{\frac{\theta}{2}}}\theta^{3/4}}$, which results in the PDF of $|h|^4\Theta=|h|^4\sum\limits_{k=1}^{2\beta}x_k^4$ as $f_{\Theta|H}(\theta|h)=\frac{1}{|h|^4}f_{\Theta}\left(\frac{\theta}{|h|^4} \right) =\frac{1}{2\pi\sqrt{\theta^{3/2}(\sqrt{2}|h|^2-\sqrt{\theta})}}.$ By unconditioning on $|h|$, we obtain $f_{\Theta}(\theta)=\frac{1}{2^{\frac{5}{4}}\sqrt{\pi}\theta^{\frac{3}{4}}}e^{-\sqrt{\frac{\theta}{2}}}$. Hence, we obtain
\vspace{-2mm}
\begin{equation}
z_{\text{NC}}=r^{-\alpha}\rho_1\mathbb{E}[\Delta]+r^{-2\alpha}\rho_2\mathbb{E}[\Theta]=r^{-\alpha}\rho_1+\frac{3}{2}r^{-2\alpha}\rho_2.
\end{equation}

\subsection{Case $\beta>1$}

As the proof can be obtained by following a similar procedure as described in Appendix B, i.e. we follow a CLT-based approach, the detailed derivation has been omitted due to space limitation. Finally, by combining the two cases of $\beta=1$ and $\beta>1$,we obtain $z_{\text{NC}}=r^{-\alpha}\rho_1\beta+\frac{3}{2}r^{-2\alpha}\rho_2\beta,$ $ \forall$ $\beta.$

\vfill\pagebreak
\bibliographystyle{IEEEbib}
\bibliography{Draft}
\end{document}